\documentclass{ifcolog}

\usepackage{makeidx}
\usepackage{amsthm}
\usepackage{amssymb}
\usepackage{algorithm}
\usepackage{algpseudocode}
\usepackage{float}

\newcommand{\Pdt}{P_{\mathrm{D}}^{*}}
\newcommand{\Plp}{P_{\mathrm{D}}^{\mathrm{LP}}}
\newcommand{\Sd}{S_{\mathrm{D}}}
\newcommand{\rr}{\bar{r}}

\newcommand{\sbank}{s_{\textrm{BANK}}}
\newcommand{\slp}{s_{\mathrm{LP}}}
\newcommand{\soracle}{s_{\textrm{O}}}
\newcommand{\sfree}{s_{\textrm{FREE}}}
\newcommand{\sarb}{s_{\textrm{ARB}}}
\newcommand{\sib}{s_{\textrm{IB}}}
\newcommand{\sburn}{s_{\textrm{BURN}}}
\newcommand{\ubank}{U_{BANK}}
\newcommand{\ulp}{U_{LP}}

\newcommand{\tautx}{\tau_{\textrm{TX}}}

\newcommand{\epsilonfree}{\epsilon_{\textrm{FREE}}}
\newcommand{\epsilonib}{\epsilon_{\textrm{IB}}}
\newcommand{\epsilondexyzero}{\epsilon_{\textrm{ED}}^{0}}
\newcommand{\epsilondexyone}{\epsilon_{\textrm{ED}}^{1}}
\newcommand{\epsilonreleasedexy}{\epsilon_{\textrm{RD}}}

\newcommand{\dx}{Dexy}
\newcommand{\dxg}{DexyGold}
\newcommand{\dxusd}{USE}

\newtheorem{definition}{Definition}
\newtheorem{theorem}{Theorem}
\newtheorem{corollary}{Corollary}

\title{Dexy:\protect \\ A Simple Stablecoin Design Based on an Algorithmic Central Bank}

\titlerunning{Dexy}

\addauthor[kushti@protonmail.ch]
  {Alexander Chepurnoy}
  {Ergo Platform}
\addauthor[]
  {Amitabh Saxena}
  {Ergo Platform}
\addauthor[ldgaetano@protonmail.com]
  {Luca D'Angelo}
  {The Stable Order}

\authorrunning{A. Chepurnoy and A. Saxena and L. D'Angelo}

\titlethanks{
We would like to thank Ile for his inspiring forum posts, 
Bruno for discussions and the extract-to-future bank action idea, and 4EYES Labs for 
the audit of the ErgoScript contracts.
}

\begin{document}
\maketitle

\begin{abstract}
We consider a new algorithmic stablecoin called Dexy consisting of three components: 
a central bank, a secondary market in the form of a customized CP-AMM liquidity pool, and a trusted oracle. 
The price of the stablecoin in the liquidity pool is stabilized, relative to the oracle price, via central bank interventions. 
Users mint Dexy stablecoins from the bank using basecoins, the base cryptocurrency asset.
\end{abstract}

\section{Introduction}
\label{sec:introduction}
Due to the inherent volatility of cryptocurrency prices, algorithmic stablecoins naturally arose as their extension 
to address this problem. The volatility is resolved by pegging the stablecoin to an asset whose price is 
considered to be stable over long periods of time, e.g. gold. 

Having an asset with a stable value can be useful in many scenarios:
\begin{itemize}
\item Securing fundraising: a project can be sure that funds collected during fundraising will have stable value in the mid-to-long term.
\item Doing business with predictable results: a shop can be sure that funds collected from sales will have roughly the same value when the shop is ordering goods from warehouses, otherwise, the shop may go bankrupt if its margins are too low to cover exchange rate fluctuations. 
\item Shorting: when cryptocurrency prices are high, it is desirable for investors to rebalance their portfolio by increasing exposure to fiat currencies or real-world commodities. However, as fiat currencies and centralized exchanges impose significant risks, it would be better to buy fiat and commodity substitutes in the form of stablecoins on decentralized exchanges.
\item Lending and other decentralized finance applications: stability of collateral value is critical for many applications built on top of these services.
\end{itemize}

Centralized stablecoins, such as USDT and USDC, use a trusted party to ensure that the peg is maintained. For algorithmic stablecoins, 
the peg can be maintained in different ways depending on the implementation, e.g. rebasing of the total supply \cite{Ampleforth}, 
or via imitating a trusted party that holds reserves and then performs market interventions for rebalancing the peg. 
Imitating the trusted party is usually done by allowing anyone on the blockchain to create over-collateralized financial instruments, 
such as collateralized debt positions \cite{DAI}, zero-coupon bonds (as was done in the failed Yield protocol), 
reserve assets \cite{DjedEprint,Gluon}, or by issuing stabilizing financial instruments in case of a depeg \cite{Neutrino}.

In this work, we present \dx{}, a stablecoin protocol where an algorithmic central bank performs interventions when there is a depeg.
The bank stabilizes the stablecoin value in the reference market, a liquidity pool, by injecting basecoins from its reserves when the 
stablecoin price is below the peg. In extreme cases, when bank reserves are depleted and the stablecoin price is still below the peg, 
the liquidity pool price is restored by extracting stablecoins from it; these will be injected back again at a future time, 
when the stablecoin price is above the peg by a certain threshold. 

The rest of the paper is organized as follows. In section \ref{sec:design}, the \dx{} design in general is explained, while 
section \ref{sec:protocol} defines the protocol specification in detail.
In section \ref{sec:theorems} we consider different theorems related to the stability and design of the protocol.
Finally, section \ref{sec:implementations} describes two \dx{} implementations: the Gold-pegged DexyGold stablecoin and the USD-pegged USE stablecoin.

\section{\dx{} Design}
\label{sec:design}

The \dx{} protocol functions via an algorithmic central bank that holds reserves in basecoins and mints stablecoins. 
A corresponding liquidity pool exists where stablecoins and basecoins can be traded. 
The peg price used by the bank is provided by an oracle, which is assumed to be trusted and reliable. 
Users can interact with both the bank and the liquidity pool, but certain rules and restrictions are applied to control 
the price stability. Central bank interventions are determined by the ratio of the liquidity pool price to the oracle price, $\rr$, 
which we also call the reference reserve ratio. 

\subsection{The Liquidity Pool}
\label{subsec:pool}

The reference market is implemented as an automated market maker (e.g. \cite{xu2021sok}). 
For simplicity, a modified constant-product automated market maker (CP-AMM) is used, 
similar to ErgoDex \cite{ErgoDex} and Uniswap \cite{Uniswap}. The liquidity pool holds a pair of 
basecoins and stablecoins. Liquidity providers deposit a pair of basecoins 
and stablecoins and receive LP tokens in proportion to their contribution. 
Each trade is required to leave the product of the two reserves non-decreasing: 
a swap fee is charged on the traded-in amount and retained in the pool, accruing 
for all LP token holders pro-rata. LP tokens can later be redeemed to withdraw a 
proportional share of the liquidity pool's current reserves, 
minus a redemption fee used to disincentivize frequent withdrawals.
Liquidity provider withdrawals are halted entirely whenever the pool price is 
below the oracle price by a certain threshold.

\subsection{The Algorithmic Central Bank}
\label{subsec:bank}

By depositing basecoins into the bank, users can mint stablecoins. 
The price provided by the bank will always equal the oracle price. 
A fee is charged so the bank reserves can increase faster, along with a fee for the oracle usage.
Under certain conditions, the bank can intervene into the liquidity pool to control its price relative to the oracle price.

\subsection{Trust Assumptions}
\label{subsec:kya}

In general, for any DeFi protocol involving user funds, it is important for trust assumptions to be explicitly stated so 
that users can make informed decisions to the best of their ability.

The \dx{} protocol has two trust assumptions:

\begin{itemize}
  \item{Oracle:} The protocol assumes the target price oracle is reliable. This trust assumption is unavoidable, however, 
  it can be relaxed by using a federation of oracles that produces an average price to be delivered to the central bank, 
  instead of relying on a single entity. In any case, it is important that oracles are rewarded from protocol usage, 
  otherwise, they will not be incentivized to keep providing valid data, see Section \ref{subsec:oracle} for details.

  \item{No governance:} Unlike most stablecoin protocols, we do not consider governance, as it can be used to 
  adversarially attack the protocol. We suppose that real-world instantiations that do have it should clearly state their 
  governance-related trust assumptions.
\end{itemize}

\subsection{Actions and Dynamics}
\label{subsec:actions}

The following table summarizes the different actions in the protocol, the $\rr$ range when the action happens, and the restored $\rr'$ range after the action happens.

\begin{table}[h]
\begin{tabular}{|l|l|l|}
\hline
\multicolumn{1}{|c|}{\textbf{Action}} & \multicolumn{1}{c|}{\textbf{Range}}                                     & \multicolumn{1}{c|}{\textbf{Restore Range}}                                                           \\ \hline
FreeMint                              & $1- \epsilon_{\textrm{FREE}} \le \rr \le 1+ \epsilon_{\textrm{FREE}}$   & \multicolumn{1}{c|}{-}                                                                                \\ \hline
ArbitrageMint                         & $\rr > 1 + \epsilon_{\textrm{FREE}}$                                    & \multicolumn{1}{c|}{-}                                                                                \\ \hline
InjectBasecoins                       & $\rr < 1 - \epsilon_{\textrm{FREE}}$                                    & $1 - \epsilon_{\textrm{FREE}} \le \rr' \le 1-\epsilon_{\textrm{IB}}$                                  \\ \hline
ExtractDexy                           & $\rr \le 1 - \epsilon_{\textrm{ED}}^{0} < 1 - \epsilon_{\textrm{FREE}}$ & $1 - \epsilon_{\textrm{FREE}} - \epsilon_{\textrm{ED}}^{1} \le \rr' \le 1-\epsilon_{\textrm{FREE}}$   \\ \hline
ReleaseDexy                           & $1 + \epsilon_{\textrm{RD}} < \rr <  1 + \epsilon_{\textrm{FREE}}$      & $1 + \epsilon_{\textrm{RD}} \le \rr' \le 1 + \epsilonfree$                                                                 \\ \hline
StoppedWithdrawals                    & $\rr < 1 - \epsilon_{\textrm{FREE}}$                                    & \multicolumn{1}{c|}{-}                                                                  \\ \hline
\end{tabular}
\end{table}

We now give a high-level description of what happens during each action. For more technical details, refer to section \ref{sec:protocol}.

\subsubsection{FreeMint}
  \begin{itemize}
    \item $\rr$ is bounded within the range $1- \epsilon_{\textrm{FREE}} \le \rr \le 1+ \epsilon_{\textrm{FREE}}$.
    \item If bounded within the range, then minting can occur for a fixed period of time $T_{\textrm{FREE}}$.
    \item Users can mint from the bank up to a maximum amount determined by the amount of Dexy stablecoins in the liquidity pool at the beginning of the period.
    \item The bank charges a fee so the bank reserves can increase faster.
    \item \textbf{Consequence:} Increase bank reserves and stablecoin supply when the pool price is considered stable.
  \end{itemize}
  
\subsubsection{ArbitrageMint}
  \begin{itemize}
    \item $\rr$ is greater than $1 + \epsilon_{\textrm{FREE}}$.
    \item If $\rr$ is above the threshold for an amount of time greater than $\tau_{\textrm{ARB}}$,
          then minting can occur for a fixed period of time $T_{\textrm{ARB}}$.
    \item The liquidity pool and oracle price difference, at the beginning of the period, determines the max amount of dexy stablecoins to mint $\delta_{d}$.
    \item The bank charges a fee so the bank reserves can increase faster.
    \item \textbf{Consequence:} Purchase stablecoins from the bank at a discount and sell them to the liquidity pool at premium, 
    increasing the liquidity pool stablecoin supply.
  \end{itemize}

\subsubsection{InjectBasecoins}
  \begin{itemize}
    \item $\rr$ is less than $1 - \epsilon_{\textrm{FREE}}$.
    \item $\rr$ is less than the threshold for an amount of time greater than $\tau_{\textrm{IB}}$, upon which the period begins.
    \item An amount of time greater than $T_{\textrm{IB}}$ has elapsed before the first intervention, i.e. since the period began, or since the previous intervention. 
    \item The bank still has basecoin reserves.
    \item The bank will inject basecoins from its reserves into the liquidity pool 
          until the new $\rr$ is within the range $1 - \epsilon_{\textrm{FREE}} \le \rr' \le 1-\epsilon_{\textrm{IB}}$.
    \item The amount of basecoins injected is a fixed percentage $\alpha_{\textrm{IB}}$ of the bank's total basecoin reserve.
    \item \textbf{Consequence:} The bank's basecoin reserve decreases and the liquidity pool price recovers.
  \end{itemize}

\subsubsection{ExtractDexy}
  \begin{itemize}
    \item $\rr$ is less than $1 - \epsilon_{\textrm{ED}}^{0} < 1 - \epsilon_{\textrm{FREE}}$.
    \item $\rr$ is less than the threshold for an amount of time greater than $\tau_{\textrm{ED}} >> \tau_{\textrm{IB}}$, upon which the period begins.
    \item An amount of time greater than $T_{\textrm{BURN}}$ has elapsed before the first extraction, i.e. since the period began, or since the previous extraction. 
    \item The bank has no more basecoin reserves.
    \item The bank will extract stablecoins from the liquidity pool until the new $\rr$ is in 
          the range $1 - \epsilon_{\textrm{FREE}} - \epsilon_{\textrm{ED}}^{1} \le \rr' \le 1-\epsilon_{\textrm{FREE}}$, 
          so essentially the stopped-withdrawal state, and then lock them to be released at a future time.
    \item \textbf{Consequence:} The bank removes stablecoins from the liquidity pool and the liquidity pool price recovers.
  \end{itemize}

\newpage 

\subsubsection{ReleaseDexy}
  \begin{itemize}
    \item $\rr$ is within the range $1 + \epsilon_{\textrm{RD}} < \rr <  1 + \epsilon_{\textrm{FREE}}$.
    \item $\rr$ is within the range for an amount of time greater than $\tau_{\textrm{RD}}$, upon which the period begins.
    \item An amount of time greater than $T_{\textrm{BURN}}$ has elapsed before the first release, i.e. since the period began, or since the previous release. 
    \item The bank will release the locked stablecoins back into the liquidity pool until the new $\rr$ is greater than or equal to $1 + \epsilon_{\textrm{RD}}$
    \item \textbf{Consequence:} The bank releases stablecoins back into the liquidity pool such that it is prioritized over arbitrage minting.
  \end{itemize}

\subsubsection{StoppedWithdrawals}
  \begin{itemize}
    \item $\rr$ is less than $1 - \epsilon_{\textrm{FREE}}$.
    \item Liquidity pair withdrawals are stopped without delay once the $\rr$ intervention threshold is met.
    \item Only swaps between basecoins and stablecoins are allowed.
    \item \textbf{Consequence:} Liquidity providers, i.e. LP token holders, cannot withdraw their liquidity pair from the liquidity pool to prevent further price depegging.
  \end{itemize}

\section{Protocol Specification}
\label{sec:protocol}
We now describe the \dx{} protocol specification, which includes immutable protocol parameters and state variables. 
We note the distinction between an action period, which measures how long before a bank intervention can happen, 
and a corresponding action bank delay, which measures how long before the start of the first bank action period. 
For a particular bank action, there may be multiple periods before a new delay is required.

The central bank parameters are a tuple $C_{\textrm{BANK}}$ of: \\
\resizebox{\textwidth}{!}{$\displaystyle\left( \phi_{\textrm{BANK}}^{\textrm{FREE}}, \phi_{\textrm{BANK}}^{\textrm{ARB}}, \epsilon_{\textrm{FREE}}, \epsilon_{\textrm{IB}}, \epsilon_{\textrm{ED}}^{0}, \epsilon_{\textrm{ED}}^{1}, \epsilon_{\textrm{RD}}, T_{\textrm{FREE}}, T_{\textrm{ARB}}, T_{\textrm{IB}}, T_{\textrm{BURN}}, \tau_{\textrm{ARB}}, \tau_{\textrm{IB}}, \tau_{\textrm{ED}}, \tau_{\textrm{RD}}, \tau_{\textrm{TX}}, \alpha_{\textrm{FREE}}, \alpha_{\textrm{IB}}  \right)$}

\begin{itemize}
  \item $\phi_{\textrm{BANK}}^{\textrm{FREE}} \in (0, 1)$ is the bank fee for the free-mint action.
  \item $\phi_{\textrm{BANK}}^{\textrm{ARB}} \in (0,1)$ is the bank fee for the arbitrage-mint action.
  \item $\epsilon_{\textrm{FREE}} \in (0, 1)$ defines the upper and lower bounds of the free-mint range, representing the percentage that the LP price deviates from the oracle price.
  \item $0 < \epsilon_{\textrm{IB}} < \epsilon_{\textrm{FREE}}$ defines the percentage where the inject-basecoins action will restore the price.
  \item $\epsilon_{\textrm{FREE}} < \epsilon_{\textrm{ED}}^{0} < 1$ defines the percentage where the extract-dexy action will happen.
  \item $0 < \epsilon_{\textrm{ED}}^{1} < \epsilon_{\textrm{FREE}}$ defines the percentage where the extract-dexy action will restore the price.
  \item $\epsilon_{\textrm{RD}} < \epsilon_{\textrm{FREE}}$ defines the percentage where the release-dexy action will happen.
  \item $T_{\textrm{FREE}} \in \mathbb{R}_{>0}$ is the free-mint period.
  \item $T_{\textrm{ARB}} \in \mathbb{R}_{>0}$ is the arbitrage-mint period.
  \item $T_{\textrm{IB}} \in \mathbb{R}_{>0}$ is the inject-basecoins period.
  \item $T_{\textrm{BURN}} \in \mathbb{R}_{>0}$ is the extract-dexy and release-dexy period.
  \item $\tau_{\textrm{ARB}} \in \mathbb{R}_{>0}$ is the arbitrage-mint bank delay.
  \item $\tau_{\textrm{IB}} \in \mathbb{R}_{>0}$ is the inject-basecoins bank delay.
  \item $\tau_{\textrm{ED}} >> \tau_{\textrm{IB}}$ is the extract-dexy bank delay, which must be much larger than the inject-basecoins bank delay, since extraction is an action of last-resort.
  \item $\tau_{\textrm{RD}} \in \mathbb{R}_{>0}$ is the release-dexy bank delay.
  \item $\tau_{\textrm{TX}} \in \mathbb{R}_{>0}$ is the allowed transaction propagation delay.
  \item $\alpha_{\textrm{FREE}} \in (0,1)$ is the percentage of dexy liquidity, in the liquidity pool, that determines the max amount of dexy that can be minted from the bank during the free-mint period.
  \item $\alpha_{\textrm{IB}} \in (0,1)$ is the percentage of the bank's basecoin reserve that determines the max amount of basecoins to inject into the liquidity pool during an inject-basecoin action.
\end{itemize}

The central bank state is determined by a tuple of two values:
$$s_{\textrm{BANK}} = \left(R, S_{D} \right)$$

\begin{itemize}
  \item $\ubank$ is a finite set of users that interact with the bank to mint stablecoins.
  \item $R \in \mathbb{R}_{>0}$ is the bank's basecoin reserves.
  \item $S_D \in \mathbb{R}_{>0}$ is the total supply of minted dexy stablecoins.
\end{itemize}

The liquidity pool has two parameters:
$$C_{LP} = \left(\phi^{REDEEM}_{LP}, \phi^{SWAP}_{LP}\right)$$

\begin{itemize}
  \item $\ulp$ is a finite set of users that interact with the liquidity pool, either as liquidity providers or traders.
  \item $\phi^{REDEEM}_{LP} \in (0,1)$ is the liquidity provider withdrawal fee, which should be high enough to disincentivize frequent withdrawals.
  \item $\phi^{SWAP}_{LP} \in (0, 1)$ is the liquidity pool swap fee, which is used to reward liquidity providers. 
\end{itemize}

The liquidity pool state is determined by a tuple of three values:
$$s_{\textrm{LP}} = \left(B, D, S_{LP} \right)$$

\begin{itemize}
  \item $B \in \mathbb{R}_{>0}$ is the amount of basecoins in the liquidity pool.
  \item $D \in \mathbb{R}_{>0}$ is the amount of dexy stablecoins in the liquidity pool.
  \item $S_{LP} \in \mathbb{R}_{>0}$ is the supply of LP tokens provided to liquidity providers.
\end{itemize}

The oracle parameters are assumed to only be the following:
$$C_{O} = \left(\phi_{O}\right)$$

\begin{itemize}
  \item $\phi_{O} \in (0,1)$ is the oracle fee.
\end{itemize}

The oracle state is determined by a tuple of two values:
$$s_{\textrm{O}} = \left(\Pdt, F \right)$$

\begin{itemize}
  \item $\Pdt \in \mathbb{R}_{>0}$ is the target stablecoin price, in units of $\frac{\textrm{BASECOIN}}{\textrm{PEG ASSET}}$.
  \item $F \in \mathbb{R}_{\ge 0}$ is the accumulated fees of the oracle.
\end{itemize}

For the sake of clarity, we now define separate state variables for the different bank actions, 
but these can also be considered as part of the bank state.

The FreeMint action has the following state:
$$s_{\textrm{FREE}} = \left( t, d \right)$$

\begin{itemize}
  \item $t \in \mathbb{R}_{>0}$ is the time when the free-mint period ends.
  \item $d \in \mathbb{R}_{\ge0}$ is the remaining amount of dexy stablecoins that can be minted during the current free-mint period, determined by $t$.
\end{itemize}

The ArbMint action has the following state:
$$s_{\textrm{ARB}} = \left(t, d, \tau\right)$$

\begin{itemize}
  \item $t \in \mathbb{R}_{>0}$ is the time when the arbitrage-mint period ends.
  \item $d \in \mathbb{R}_{\ge 0}$ is the remaining amount of dexy stablecoins that can be minted during the current arbitrage-mint period, determined by $t$.
  \item $\tau \in \mathbb{R}_{>0}\cup\{\infty\}$ is the time when $\rr > 1 + \epsilon_{\textrm{FREE}}$. This can occur more than once as time goes on,
        each occurrence corresponding to a potentially new arbitrage-mint period. 
\end{itemize}

The InjectBasecoin action has the following state:
$$s_{\textrm{IB}} = \left(t, \tau\right)$$

\begin{itemize}
  \item $t \in \mathbb{R}_{>0}$ is the time when the previous inject-basecoin action occurred.
  \item $\tau \in \mathbb{R}_{>0}\cup\{\infty\}$ is the time when $\rr < 1 - \epsilon_{\textrm{FREE}}$. This can occur more than once as time goes on,
        each occurrence corresponding to a potentially new inject-basecoin period.
\end{itemize}

The ExtractDexy and ReleaseDexy actions share the following state:
$$s_{\textrm{BURN}} = \left(t_{\textrm{ED}}, t_{\textrm{RD}}, d, \tau^{\textrm{ED}}, \tau^{\textrm{RD}} \right)$$

\begin{itemize}
  \item $t_{\textrm{ED}} \in \mathbb{R}_{>0}$ is the time when the previous extract-dexy action occurred.
  \item $t_{\textrm{RD}} \in \mathbb{R}_{>0}$ is the time when the previous release-dexy action occurred.
  \item $d \in \mathbb{R}_{\ge 0}$ is the remaining dexy amount to be released.
  \item $\tau^{\textrm{ED}} \in \mathbb{R}_{>0}\cup\{\infty\}$ is the time when $\rr \le 1 - \epsilon_{\textrm{ED}}^{0}$. 
  \item $\tau^{\textrm{RD}} \in \mathbb{R}_{>0}\cup\{\infty\}$ is the time when $\rr > 1 + \epsilon_{\textrm{RD}}$.  
\end{itemize}

\subsection{Auxiliary Definitions}

\begin{definition}[Bank Liability]
  The bank liability is defined as the product of 
  the oracle price and the minted supply of dexy stablecoins:
  \begin{equation}
    L(\Pdt, \sbank) = \Pdt \Sd
  \end{equation}
\end{definition}

\begin{definition}[Bank Reserve Ratio]
  The bank reserve ratio is defined as the ratio of its reserves to its liabilities:
  \begin{equation}
    r(\Pdt, \sbank) = \frac{R}{L(\Pdt, \sbank)} = \frac{R}{\Pdt \Sd}
  \end{equation}
  Assuming the oracle price is constant, and the reserves are greater than the liabilities,
  the bank reserve ratio will always increase after stablecoins are minted. 
  During periods of sudden basecoin price crashes, the reserve ratio will decrease. 
  Since users cannot sell dexy stablecoins to the bank, the bank is liable to no one, 
  thus the bank reserves can be less than its liabilities, which is expected during bank interventions.
  So, a measure of the health of the \dx{} protocol is having an increasing reserve ratio, 
  before and after bank interventions.
 
\end{definition}

\begin{definition}[Reference Price]
  The liquidity pool price, or the reference market price, is defined as follows:
  \begin{equation}
    \Plp(\slp) = \frac{B}{D}
  \end{equation}
  The reference price, $\Plp$, has units of $\frac{\textrm{BASECOIN}}{\textrm{STABLECOIN}}$.
\end{definition}

\begin{definition}[Reference Reserve Ratio]
    The ratio of the liquidity pool price to the oracle price is called the reference reserve ratio:
\begin{equation}
  \label{eq:r}
  \bar{r}(\Pdt, \slp) = \frac{\Plp(\slp)}{\Pdt} = \frac{B}{\Pdt D}
\end{equation}
  The aim of the \dx{} protocol is to maintain $\rr$ within a sufficiently small neighbourhood of 1.
  $\rr$ is called the reference reserve ratio since it has a similar structure to the formula for a reserve ratio and 
  we can think of it as the reserve ratio for the liquidity pool.
\end{definition}

\begin{definition}[Arbitrage Mint Maximum]
  In the arbitrage mint state, when $\rr > 1 + \epsilonfree$, the maximum amount of stablecoins the bank 
  will allow to be minted is:
\begin{equation}
  \delta_{d} = \frac{\delta_b}{\tilde{\Pdt}}
\end{equation}
  Where:
\begin{align}
  \delta_b &= B - D * \tilde{\Pdt} \\
  \tilde{\Pdt} &= \Pdt(1 + \phi_{\textrm{BANK}}^{\textrm{FREE}} + \phi_{O})
\end{align}
\end{definition}

\newpage

\subsection{State Update Procedures}

In order to perform each bank action, we must know both the current value of $\rr$ and when
the transition from one region to another occurred. The following state update procedures ensure
that the bank always has the accurate state information available before any bank action is performed.
The procedures are meant to be continuously running, checking which bank action states can be updated
according to whatever value the $\rr$ has. 

In practice, on eUTxO \cite{EUTXO} blockchains, these procedures are validators, such that if they evaluate to $\textsc{true}$,
then the corresponding transaction built to update the state can be successfully submitted to the blockchain for inclusion.

\subsubsection{UpdateArbitrageMint}

\begin{algorithm}
\caption{UpdateArbitrageMint}\label{alg:update-arbmint}
\begin{algorithmic}[1]
\Require $\soracle, \slp, \sarb, \textrm{Update time } t', \textrm{Current time } t$
\State $(\tilde{t}, \tilde{d}, \tilde{\tau}) \gets \sarb$
\State $(\Pdt, F) \gets \soracle$
\State $\rr \gets \rr(\Pdt, \slp)$
\State $notUpdated \gets (\tilde{\tau} == \infty)$
\State $validUpdate \gets (t' \ge t - \tautx) \land (t' \le t)$ \Comment{Update time is in a valid interval.}
\State $notReset \gets (\tilde{\tau} < \infty)$
\State $validReset \gets (t' == \infty)$
\State $isUpdate \gets (\rr > 1  + \epsilonfree) \land notUpdated \land validUpdate$ 
\State $isReset \gets (\rr \le 1 + \epsilonfree) \land notReset \land validReset$
\If{$isUpdate \lor isReset$}
  \State $\sarb \gets (\tilde{t}, \tilde{d}, t')$
  \State \Return \textsc{true}
\Else
  \State \Return \textsc{false}
\EndIf
\end{algorithmic}
\end{algorithm}

The bank requires knowing at what time the $\rr > 1 + \epsilonfree$ for the arbitrage minting period to begin 
or when $\rr \le 1 + \epsilonfree$ to determine the end of the arbitrage minting period. 
Algorithm \ref{alg:update-arbmint} checks that the desired update values provided are valid for either case so the state 
can be updated correctly.

\newpage

\subsubsection{UpdateInjectBasecoins}

\begin{algorithm}[H]
\caption{UpdateInjectBasecoins}\label{alg:update-inject-basecoins}
\begin{algorithmic}[1]
\Require $\soracle, \slp, \sib, \textrm{Update time } t', \textrm{Current time } t$
\State $(\tilde{t}, \tilde{\tau}) \gets \sib$
\State $(\Pdt, F) \gets \soracle$
\State $\rr \gets \rr(\Pdt, \slp)$
\State $notUpdated \gets (\tilde{\tau} == \infty)$
\State $validUpdate \gets (t' \ge t - \tautx) \land (t' \le t)$ \Comment{Update time is in a valid interval.}
\State $notReset \gets (\tilde{\tau} < \infty)$
\State $validReset \gets (t' == \infty)$
\State $isUpdate \gets (\rr < 1  - \epsilonfree) \land notUpdated \land validUpdate$ 
\State $isReset \gets (\rr \ge 1 - \epsilonfree) \land notReset \land validReset$
\If{$isUpdate \lor isReset$}
  \State $\sib \gets (\tilde{t}, t')$
  \State \Return \textsc{true}
\Else
  \State \Return \textsc{false}
\EndIf
\end{algorithmic}
\end{algorithm}

The bank requires knowing at what time the $\rr < 1 - \epsilonfree$ for the inject basecoins period to begin 
or when $\rr \ge 1 - \epsilonfree$ to determine the end of the inject basecoins period. 
Algorithm \ref{alg:update-inject-basecoins} checks that the desired update values provided are valid for either case so the state 
can be updated correctly.

\newpage

\subsubsection{UpdateExtractDexy}

\begin{algorithm}[H]
\caption{UpdateExtractDexy}\label{alg:update-extract-dexy}
\begin{algorithmic}[1]
\Require $\soracle, \slp, \sburn, \textrm{Update time } t', \textrm{Current time } t$
\State $(\tilde{t}_{\textrm{ED}}, \tilde{t}_{\textrm{RD}}, \tilde{d}, \tilde{\tau}^{\textrm{ED}}, \tilde{\tau}^{\textrm{RD}}) \gets \sburn$
\State $(\Pdt, F) \gets \soracle$ 
\State $\rr \gets \rr(\Pdt, \slp)$
\State $notUpdated \gets (\tilde{\tau}^{\textrm{ED}} == \infty)$
\State $validUpdate \gets (t' \ge t - \tautx) \land (t' \le t)$ \Comment{Update time is in a valid interval.}
\State $notReset \gets (\tilde{\tau}^{\textrm{ED}} < \infty)$
\State $validReset \gets (t' == \infty)$
\State $isUpdate \gets (\rr \le 1 - \epsilondexyzero) \land notUpdated \land validUpdate$ 
\State $isReset \gets (\rr > 1 - \epsilondexyzero) \land notReset \land validReset$
\If{$isUpdate \lor isReset$}
  \State $\sburn \gets (\tilde{t}_{\textrm{ED}}, \tilde{t}_{\textrm{RD}}, \tilde{d}, t', \tilde{\tau}^{\textrm{RD}})$
  \State \Return \textsc{true}
\Else
  \State \Return \textsc{false}
\EndIf
\end{algorithmic}
\end{algorithm}

The bank requires knowing at what time the $\rr \le 1 - \epsilondexyzero$ for the extract dexy period to begin 
or when $\rr > 1 - \epsilondexyzero$ to determine the end of the extract dexy period. 
Algorithm \ref{alg:update-extract-dexy} checks that the desired update values provided are valid for either case so the state 
can be updated correctly.

\newpage

\subsubsection{UpdateReleaseDexy}

\begin{algorithm}[H]
\caption{UpdateReleaseDexy}\label{alg:update-release-dexy}
\begin{algorithmic}[1]
\Require $\soracle, \slp, \sburn, \textrm{Update time } t', \textrm{Current time } t$
\State $(\tilde{t}_{\textrm{ED}}, \tilde{t}_{\textrm{RD}}, \tilde{d}, \tilde{\tau}^{\textrm{ED}}, \tilde{\tau}^{\textrm{RD}}) \gets \sburn$
\State $(\Pdt, F) \gets \soracle$ 
\State $\rr \gets \rr(\Pdt, \slp)$
\State $notUpdated \gets (\tilde{\tau}^{\textrm{RD}} == \infty)$
\State $validUpdate \gets (t' \ge t - \tautx) \land (t' \le t)$ \Comment{Update time is in a valid interval.}
\State $notReset \gets (\tilde{\tau}^{\textrm{RD}} < \infty)$
\State $validReset \gets (t' == \infty)$
\State $isUpdate \gets (\rr > 1  + \epsilonreleasedexy) \land notUpdated \land validUpdate$ 
\State $isReset \gets (\rr \le 1 + \epsilonreleasedexy) \land notReset \land validReset$
\If{$isUpdate \lor isReset$}
  \State $\sburn \gets (\tilde{t}_{\textrm{ED}}, \tilde{t}_{\textrm{RD}}, \tilde{d}, \tilde{\tau}^{\textrm{ED}}, t')$
  \State \Return \textsc{true}
\Else
  \State \Return \textsc{false}
\EndIf
\end{algorithmic}
\end{algorithm}

The bank requires knowing at what time the $\rr > 1  + \epsilonreleasedexy$ for the release dexy period to begin 
or when $\rr \le 1 + \epsilonreleasedexy$ to determine the end of the release dexy period. 
Algorithm \ref{alg:update-release-dexy} checks that the desired update values provided are valid for either case so the state 
can be updated correctly.

\newpage

\subsection{Protocol Operations}

\subsubsection{FreeMint}

\begin{algorithm}[H]
\caption{FreeMint}\label{alg:freemint}
\begin{algorithmic}[1]
  \Require $\soracle, \slp, \sbank, \sfree, \textrm{Basecoins } b, \textrm{Updated period end time } T', \textrm{Current time } t$
  \State $(\tilde{t}, \tilde{d}) \gets \sfree$
  \State $(R, \Sd) \gets \sbank$
  \State $(B, D) \gets \slp$
  \State $(\Pdt, F) \gets \soracle$  
  \State $\rr \gets \rr(\Pdt, \slp) $
  \State $\tilde{\Pdt} \gets \Pdt(1 + \phi_{\textrm{BANK}}^{\textrm{FREE}} + \phi_{O})$
  \State $isCounterReset \gets (t > \tilde{t})$ \Comment{If true, then the protocol is in a new FreeMint period.}
  \State $dexyMinted \gets \frac{b}{\tilde{\Pdt}}$
  \State $oracleFeeAmount \gets dexyMinted * \Pdt * \phi_{O}$
  \State $\delta_{d} \gets \alpha_{\textrm{FREE}} * D$ \Comment{The maximum dexy mintable during the FreeMint period.}
  \State $dexyAvailable \gets isCounterReset \textrm{ ? } \delta_{d} \textrm{ : } \tilde{d}$
  \State $validAmount \gets dexyMinted \le dexyAvailable$
  \If{$\lnot isCounterReset$}
    \State $T' \gets \tilde{t}$
    \State $validUpdate \gets \textsc{true}$
  \Else 
    \State $validUpdate \gets T' \in [t + T_{\textrm{FREE}}, t + T_{\textrm{FREE}} + \tautx]$  \Comment{Checks new period end time.}
  \EndIf
  \State $validRatio \gets \rr \in (1-\epsilonfree, 1 + \epsilonfree)$
  \If{$validUpdate \land validRatio$}
    \State $\sbank \gets (R + b, \Sd + dexyMinted)$
    \State $\soracle \gets (\Pdt, F + oracleFeeAmount)$
    \State $\sfree \gets (T', dexyAvailable - dexyMinted)$
    \State \Return \textsc{true}
  \Else
    \State \Return \textsc{false}
  \EndIf
\end{algorithmic}
\end{algorithm}
 
Algorithm \ref{alg:freemint} checks that the protocol is in a valid free-mint range and if so allows the state to change accordingly.

\newpage

\subsubsection{ArbitrageMint}

\begin{algorithm}[H]
\caption{ArbMint}\label{alg:arbmint}
\begin{algorithmic}[1]
  \Require $\soracle, \slp, \sbank, \sarb, \textrm{Basecoins } b, \textrm{Updated period end time } T', \textrm{Current time } t$
  \State $(\tilde{t}, \tilde{d}, \tilde{\tau}) \gets \sarb$
  \State $(R, \Sd) \gets \sbank$
  \State $(B, D) \gets \slp$
  \State $(\Pdt, F) \gets \soracle$  
  \State $\rr \gets \rr(\Pdt, \slp)$
  \State $\tilde{\Pdt} \gets \Pdt(1 + \phi_{\textrm{BANK}}^{\textrm{FREE}} + \phi_{O})$
  \State $isCounterReset \gets (t > \tilde{t})$ \Comment{If true, then the protocol is in a new ArbMint period.}
  \State $dexyMinted \gets \frac{b}{\tilde{\Pdt}}$
  \State $oracleFeeAmount \gets dexyMinted * \Pdt * \phi_{O}$
  \State $\delta_{b} \gets B - D * \tilde{\Pdt}$
  \State $\delta_{d} \gets \frac{\delta_{b}}{\tilde{\Pdt}}$ \Comment{The maximum dexy mintable during the ArbMint period.}
  \State $dexyAvailable \gets isCounterReset \textrm{ ? } \delta_{d} \textrm{ : } \tilde{d}$
  \State $validAmount \gets dexyMinted \le dexyAvailable$
  \If{$\lnot isCounterReset$}
    \State $T' \gets \tilde{t}$
    \State $validUpdate \gets \textsc{true}$
  \Else 
    \State $validUpdate \gets T' \in [t + T_{\textrm{ARB}}, t + T_{\textrm{ARB}} + \tautx]$  \Comment{Checks new period end time.}
  \EndIf
  \State $validDelay \gets t > \tilde{\tau} + \tau_{\textrm{ARB}}$
  \State $validRatio \gets \rr > 1 + \epsilonfree$
  \If{$validUpdate \land validDelay \land validRatio$}
    \State $\sbank \gets (R + b, \Sd + dexyMinted)$
    \State $\soracle \gets (\Pdt, F + oracleFeeAmount)$
    \State $\sarb \gets (T', dexyAvailable - dexyMinted, \tilde{\tau})$
    \State \Return \textsc{true}
  \Else
    \State \Return \textsc{false}
  \EndIf
\end{algorithmic}
\end{algorithm}

Algorithm \ref{alg:arbmint} checks that the protocol is in a valid arbitrage-mint state and if so allows the state to change accordingly.

\newpage 

\subsubsection{InjectBasecoins}

\begin{algorithm}[H]
\caption{InjectBasecoins}\label{alg:inject-basecoins}
\begin{algorithmic}[1]
  \Require $\soracle, \slp, \sbank, \sib, \textrm{Basecoins } b, \textrm{Tx creation time } t', \textrm{Current time } t$
  \State $(\tilde{t}, \tilde{\tau}) \gets \sib$
  \State $(R, \Sd) \gets \sbank$
  \State $(B, D) \gets \slp$
  \State $(\Pdt, F) \gets \soracle$
  \State $\slp' \gets (B + b, D)$
  \State $\rr \gets \rr(\Pdt, \slp) $
  \State $\rr' \gets \rr(\Pdt, \slp')$
  \State $validAmount \gets b \le \alpha_{\textrm{IB}} * R$
  \State $validRatio \gets \rr < 1 - \epsilonfree$
  \State $validInjectionRatio \gets \rr' \in \left[ 1-\epsilonfree, 1 - \epsilonib \right]$
  \State $validDelay \gets t > \tilde{\tau} + \tau_{\textrm{IB}}$
  \State $validInjectionPeriod \gets t > \tilde{t} + T_{\textrm{IB}}$
  \State $validUpdate \gets t \le t' + \tautx$
  \State $validAmounts \gets validAmount \land validRatio \land validInjectionRatio$
  \State $validTimes \gets validDelay \land validInjectionPeriod \land validUpdate$
  \If{$validAmounts \land validTimes$}
    \State $\sbank \gets (R - b, \Sd)$
    \State $\slp \gets (B + b, D)$
    \State $\sib \gets (t', \tilde{\tau})$
    \State \Return \textsc{true}
  \Else
    \State \Return \textsc{false}
  \EndIf
\end{algorithmic}
\end{algorithm}

Algorithm \ref{alg:inject-basecoins} checks that the protocol is in a valid inject-basecoin state and if so allows the state to change accordingly.

\newpage

\subsubsection{ExtractDexy}

\begin{algorithm}[H]
\caption{ExtractDexy}\label{alg:extract-dexy}
\begin{algorithmic}[1]
  \Require $\soracle, \slp, \sburn, \textrm{Dexy } d, \textrm{Tx creation time } t', \textrm{Current time } t$
  \State $(\tilde{t}_{\textrm{ED}}, \tilde{t}_{\textrm{RD}}, \tilde{d}, \tilde{\tau}^{\textrm{ED}}, \tilde{\tau}^{\textrm{RD}}) \gets \sburn$
  \State $(B, D) \gets \slp$
  \State $(\Pdt, F) \gets \soracle$
  \State $\slp' \gets (B, D - d)$ \Comment{Removing Dexy from the liquidity pool.}
  \State $\rr \gets \rr(\Pdt, \slp) $
  \State $\rr' \gets \rr(\Pdt, \slp')$
  \State $validRatio \gets \rr \le 1 - \epsilondexyzero < 1 - \epsilonfree$
  \State $validExtractionRatio \gets \rr' \in \left[ 1-\epsilondexyone, 1 - \epsilonfree \right]$
  \State $validDelay \gets t > \tilde{\tau}^{\textrm{ED}} + \tau_{\textrm{ED}}$
  \State $validBurnPeriod \gets t > \tilde{t}_{\textrm{ED}} + T_{\textrm{BURN}}$
  \State $validUpdate \gets t \le t' + \tautx$
  \State $validAmounts \gets validRatio \land validExtractionRatio$
  \State $validTimes \gets validDelay \land validBurnPeriod \land validUpdate$
  \If{$validAmounts \land validTimes$}
    \State $\slp \gets (B, D-d)$
    \State $\sburn \gets (t', \tilde{t}_{\textrm{RD}}, d, \tilde{\tau}^{\textrm{ED}}, \tilde{\tau}^{\textrm{RD}})$
    \State \Return \textsc{true}
  \Else
    \State \Return \textsc{false}
  \EndIf
\end{algorithmic}
\end{algorithm}

Algorithm \ref{alg:extract-dexy} checks that the protocol is in a valid extract-dexy state and if so allows the state to change accordingly.

\newpage

\subsubsection{ReleaseDexy}

\begin{algorithm}[H]
\caption{ReleaseDexy}\label{alg:release-dexy}
\begin{algorithmic}[1]
  \Require $\soracle, \slp, \sburn, \textrm{Dexy } d, \textrm{Tx creation time } t', \textrm{Current time } t$
  \State $(\tilde{t}_{\textrm{ED}}, \tilde{t}_{\textrm{RD}}, \tilde{d}, \tilde{\tau}^{\textrm{ED}}, \tilde{\tau}^{\textrm{RD}}) \gets \sburn$
  \State $(B, D) \gets \slp$
  \State $(\Pdt, F) \gets \soracle$
  \State $\slp' \gets (B, D + d)$ \Comment{Adding Dexy back into the liquidity pool.}
  \State $\rr \gets \rr(\Pdt, \slp) $
  \State $\rr' \gets \rr(\Pdt, \slp')$
  \State $validRatio \gets \rr \in \left(1 + \epsilonreleasedexy , 1 + \epsilonfree\right)$
  \State $validInjectionRatio \gets \rr' <  1 + \epsilonreleasedexy$
  \State $validDelay \gets t > \tilde{\tau}^{\textrm{RD}} + \tau_{\textrm{RD}}$
  \State $validBurnPeriod \gets t > \tilde{t}_{\textrm{RD}} + T_{\textrm{BURN}}$
  \State $validUpdate \gets t \le t' + \tautx$
  \State $validInjectionAmount \gets d \in (0, \tilde{d})$
  \State $validAmounts \gets validRatio \land validInjectionRatio \land validInjectionAmount$
  \State $validTimes \gets validDelay \land validBurnPeriod \land validUpdate$
  \If{$validAmounts \land validTimes$}
    \State $\slp \gets (B, D+d)$
    \State $\sburn \gets (\tilde{t}_{\textrm{ED}}, t', \tilde{d}-d, \tilde{\tau}^{\textrm{ED}}, \tilde{\tau}^{\textrm{RD}})$
    \State \Return \textsc{true}
  \Else
    \State \Return \textsc{false}
  \EndIf
\end{algorithmic}
\end{algorithm}

Algorithm \ref{alg:release-dexy} checks that the protocol is in a valid release-dexy state and if so allows the state to change accordingly.

\newpage

\section{Theorems}
\label{sec:theorems}

In all theorems, we assume that all algorithms from section \ref{sec:protocol} run correctly so that we need only worry about their 
return values.

The first stability property we consider is the tolerance of the protocol's liquidity pool
to basecoin price crashes.

\begin{theorem}[Basecoin Price Crash Tolerance]
  Let $s$ and $s'$ be the oracle price before and after a basecoin price crash, respectively. 
  Assuming $\Plp$ is constant before and after a basecoin price crash, and 
  that $\rr \in \left[1-\epsilonfree, 1+\epsilonfree \right]$, then the liquidity pool can 
  tolerate a basecoin price crash of at most  $\frac{\rr - (1-\epsilonfree)}{\rr}$ 
  before any intervention.
\end{theorem}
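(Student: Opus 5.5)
The plan is to make precise what a basecoin price crash of size $x$ means, and then reduce the statement to a single inequality on $\rr$ that can be solved for $x$. The oracle price $s=\Pdt$ is measured in $\frac{\textrm{BASECOIN}}{\textrm{PEG ASSET}}$, so the price of one basecoin in peg-asset units is $1/s$. I will model a crash of relative size $x\in[0,1)$ as this value falling by the fraction $x$:
\[
\frac{1}{s'} = \frac{1-x}{s}, \qquad \text{equivalently} \qquad s' = \frac{s}{1-x}.
\]
This is the natural reading of ``the basecoin loses a fraction $x$ of its value'', and it is the reading that yields the stated bound.

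Next I use the hypothesis that $\Plp = B/D$ is unchanged across the crash. By the definition of the reference reserve ratio (equation \eqref{eq:r}), the ratio before is $\rr = \Plp/s$ and the ratio after is
\[
\rr' = \frac{\Plp}{s'} = \frac{\Plp(1-x)}{s} = (1-x)\,\rr .
\]
So the crash simply rescales $\rr$ by the factor $1-x$. Since $x \ge 0$, $\rr$ can only decrease, and the upper threshold $1+\epsilonfree$ plays no role.

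Then I identify what ``before any intervention'' means in terms of the actions. By the action table and the guards \emph{validRatio} in Algorithms~\ref{alg:inject-basecoins} and~\ref{alg:extract-dexy}, and by the StoppedWithdrawals condition, every downward intervention requires $\rr' < 1-\epsilonfree$. This holds because $\epsilondexyzero > \epsilonfree$, so the ExtractDexy region is contained in the InjectBasecoins region. Hence no intervention is triggered exactly when $\rr' \ge 1-\epsilonfree$, i.e. $(1-x)\rr \ge 1-\epsilonfree$. Because $\rr \ge 1-\epsilonfree > 0$, I can divide by $\rr$ to get
\[
x \le 1 - \frac{1-\epsilonfree}{\rr} = \frac{\rr-(1-\epsilonfree)}{\rr},
\]
and this bound is attained with equality at the boundary. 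The assumption $\rr \ge 1-\epsilonfree$ also guarantees that the bound is nonnegative.

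The algebra is routine. The only real obstacle is justifying the crash model $s' = s/(1-x)$ rather than, say, $s' = s(1+x)$, which would give the different bound $\frac{\rr-(1-\epsilonfree)}{1-\epsilonfree}$. I would settle this by stating explicitly that the crash is measured as the relative drop in basecoin value in peg-asset units.
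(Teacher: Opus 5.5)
Your proposal is correct and follows the paper's proof essentially verbatim: the paper sets $y = 1/s$ and $y' = 1/s'$, writes $\rr' = \Plp y' \ge 1-\epsilonfree$ with $\Plp = \rr/y$, and bounds the relative drop $(y-y')/y$, which is exactly your $x$ under $y' = (1-x)y$, so your crash model is the paper's. Your use of the \emph{validRatio} guards to show that ``no intervention'' means $\rr' \ge 1-\epsilonfree$ is a small addition that the paper leaves implicit.
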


\begin{proof}
  Let $y = \frac{1}{s}$ and $y' = \frac{1}{s'}$ be the basecoin price in units of peg asset before 
  and after the crash, respectively. After the basecoin price crash, the new reference 
  reserve ratio will be:
  
  \begin{align}
    \rr' &= \frac{\Plp}{s'} \\ 
         &= \Plp y' \\
         &\ge 1 - \epsilonfree
  \end{align}

  Since we assume the liquidity pool price remains constant before and after the crash, then:
  
  \begin{equation}
    \Plp = \rr \Pdt = \frac{\rr}{y}
  \end{equation}

  So, we then have that:
  
  \begin{align}
    y' &\ge \frac{1-\epsilonfree}{\Plp} = \frac{(1-\epsilonfree)y}{\rr} \\
       \implies \frac{y - y'}{y} &\le \frac{\rr - (1-\epsilonfree)}{\rr}
  \end{align}
\end{proof}

When there is a crash of the basecoin price, in units of the peg asset as provided by the oracle, 
there will be a time delay before the bank will intervene into the liquidity pool.
This is done to give a chance for the liquidity pool price to stabilize itself such that the bank intervention
will not be necessary. 

In the worst-case scenario, the market, i.e. a subset of $U_{LP}$, fails to correct the liquidity pool price on its own, 
within the time delay, such that $\rr$ is again in the free-mint range. Only then will the bank inject 
part of its basecoin reserve into the liquidity pool to stabilize its price. 

We have the following theorem and corollary quantifying what this minimum basecoin reserve threshold
should be in the worst-case scenario.

\begin{theorem}[Minimum Bank Reserve]
  \label{theorem:minimum-bank-reserve}
  Let $s$ be the oracle price after a basecoin price crash such that $\rr < 1 - \epsilonfree$. 
  Assuming that all users with remaining dexy stablecoins, $u \in Y \subseteq U_{LP}$, collude such that, after a bank intervention, all stablecoins 
  outside of the liquidity pool are sold at the new liquidity pool price $s$, then the bank will need
  a minimum basecoin reserve of $R_{min} \ge sD(\sqrt{\rr} - \rr) + s(\Sd - D)$ to restore the liquidity pool price.
\end{theorem}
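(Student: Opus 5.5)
The plan is to read the required reserve as the sum of two costs: the cost of moving the pool price back to the oracle price $s=\Pdt$, plus the cost of absorbing every stablecoin held outside the pool at that restored price. Throughout I write the pool state as $(B,D)$ with $\rr = B/(sD) < 1-\epsilonfree$. I model the bank's intervention as a trade against the constant-product invariant $k = BD$ and ignore the swap fee, since the fee only increases the amount required. The statement is a lower bound ("at least"), so each step only needs to show that the bank cannot do with less.

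\textbf{Step 1 (restoring the pool price).} Suppose the bank enters the pool with basecoins and buys back stablecoins along the curve $B'D' = k$ until $B'/D' = s$. Using $B = \rr s D$ gives $k = \rr s D^2$. Solving $B' = sD'$ together with $B'D' = k$ yields
\begin{equation*}
B' = \sqrt{ks} = sD\sqrt{\rr},
\end{equation*}
so the bank must supply
\begin{equation*}
B' - B = sD\sqrt{\rr} - \rr s D = sD(\sqrt{\rr}-\rr).
\end{equation*}
I will also note a comparison. A pure injection, as in Algorithm InjectBasecoins, would need $sD(1-\rr)$ to reach the same price. This exceeds $sD(\sqrt{\rr}-\rr)$ because $\sqrt{\rr}\le 1$. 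So the swap cost is the minimal amount that moves the ratio to $1$, which is what the bound requires.

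\textbf{Step 2 (absorbing outside supply).} The stablecoins outside the pool number $\Sd - D$. By the collusion hypothesis, the holders $Y$ sell all of them at the restored price $s$ after the intervention. For the price to stay restored, every such unit must be met by basecoins at price $s$, coming ultimately from the bank. That costs at least $s(\Sd - D)$. Any slippage from these sales would push the price below $s$ and only raise the bank's cost, so valuing them at exactly $s$ is the worst-case lower bound the theorem asks for. Adding Steps 1 and 2 gives
\begin{equation*}
R_{min} \ge sD(\sqrt{\rr}-\rr) + s(\Sd - D).
\end{equation*}

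\textbf{Main obstacle.} The hard part is Step 2: making the hypothesis "sold at the new liquidity pool price $s$" precise enough that the two costs can simply be added. I would first fix the interpretation that after Step 1 the bank acts as buyer of last resort at price $s$ for the outstanding supply, so that this cost is linear and independent of the pool's curve. I would then remark that the Step 1 computation still uses the pre-sale reserves $(B,D)$, which is consistent because the sales happen after the intervention.
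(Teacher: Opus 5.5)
Your proposal is correct and follows the paper's proof: the paper also gets $R_1 = \sqrt{sBD} - B = sD(\sqrt{\rr}-\rr)$ from the constant-product relation $BD = B'D'$ with $B'/D' = s$, and then adds $R_2 \ge s(\Sd - D)$ directly from the hypothesis that the outside supply is sold at price $s$. One caution: the remark that slippage "only raises the bank's cost" is wrong in the constant-product swap model you use in Step 1, since there returning the pool to $(B',D')$ costs only $sD'(\Sd-D)/(D'+\Sd-D) < s(\Sd-D)$, so drop that remark and let Step 2 rest on the stipulated sale price $s$, as the paper does.
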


\begin{proof}
  Let $B'$ and $D'$ be the balance of basecoins and dexy, respectively, in the liquidity pool after the initial 
  inject basecoin bank intervention. Then, by the CP-AMM formula, we have that $BD = B'D'$. The bank must inject $R_1$ basecoins
  into the liquidity pool such that the new balance will be $B' = B + R_1$. The new liquidity pool price will be $s = \frac{B'}{D'}$, 
  so we then have that $D' = \frac{B + R_1}{s}$. We thus obtain the following relation:
  
  \begin{align}
    BD &= B'D' = (B+R_1)\left(\frac{B+R_1}{s}\right) = \frac{(B+R_1)^2}{s} \\
       &\implies R_1 = \sqrt{sBD} - B
  \end{align}

  Let $D_{OUT}$ be the supply of dexy stablecoins outside the liquidity pool at the time of the intervention.
  Using the bank state $\sbank$, and the liquidity pool state $\slp$, we can calculate the value of this 
  quantity such that $D_{OUT} = \Sd - D$. Thus, to compensate for all $D_{OUT}$ stablecoins being sold
  in the liquidity pool after the intervention, the bank must additionally have an amount of reserves of at least:

  \begin{equation}
    R_2 \ge s D_{OUT} = s(\Sd - D)
  \end{equation}

  So, cumulatively, we now have that: 

  \begin{equation}
    \label{eq:Rmin}
    R_{min} \ge R_1 + R_2 = \sqrt{sBD} - B + s\left(\Sd - D\right)
  \end{equation}

  During an intervention, $\rr = \frac{B}{sD} < 1 - \epsilonfree$, by construction of the protocol.
  So, it follows that $B = \rr sD$. We can then substitute this into equation \ref{eq:Rmin} to obtain the following:

  \begin{align}
    R_{min} &\ge \sqrt{sBD} - B + s\left(\Sd - D\right) \\
            &= \sqrt{\rr s^2 D^2} - \rr sD + s\left(\Sd - D\right) \\
            &= sD \left( \sqrt{\rr} - \rr \right) + s\left(\Sd - D\right)
  \end{align}

\end{proof}

Since it is expected that $R_2$ is very large, it would be unrealistic for the protocol to prevent bank interventions 
before the bank's reserves reached $R_{min}$. The $\rr$ could fall below the free-mint range threshold much sooner than the
bank could adequately fill its reserves from minting fees. 

Furthermore, the collusion assumption in Theorem \ref{theorem:minimum-bank-reserve} is equivalent to there being 
one user or adversary who owns all $D_{OUT}$ stablecoins, which are then sold at the new price $s$. 
If we relax this assumption, then we have the following corollary that provides a practical minimum threshold 
for the bank's basecoin reserves, expressed as a fraction of the liquidity pool dexy stablecoin volume.  

\begin{corollary}[Minimum Liquidity Pool Volume Ratio]
  When $\rr < 1 - \epsilonfree$ and assuming that $D_{OUT}$ is distributed amongst the subset $Y$ of remaining dexy stablecoin holders such that for each $u \in Y$, 
  $\frac{d_u}{D_{OUT}} \ll 1$, then the bank's basecoin reserves should be at least $(\sqrt{\rr} - \rr)$ of the liquidity pool's dexy stablecoin volume
  after the basecoin price crash.
\end{corollary}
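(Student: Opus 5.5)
The plan is to reuse the decomposition from the proof of Theorem \ref{theorem:minimum-bank-reserve}, where the required reserve splits as $R_{min} \ge R_1 + R_2$. Here $R_1 = \sqrt{sBD} - B$ is the injection that restores the pool price. $R_2 \ge s D_{OUT}$ is the extra reserve needed to absorb the worst-case coordinated sale of all outside stablecoins. The corollary should follow by showing that, once collusion is ruled out, the $R_2$ term is no longer a binding requirement. Only $R_1$ then remains, which I will normalise by the pool's stablecoin volume.

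First I take the computation of $R_1$ verbatim from the theorem. Using $BD = B'D'$ and $B' = B + R_1$, and substituting $B = \rr s D$, we get
\begin{equation*}
R_1 = sD\left(\sqrt{\rr} - \rr\right).
\end{equation*}
This quantity is strictly positive, since $0 < \rr < 1 - \epsilonfree < 1$ gives $\sqrt{\rr} > \rr$. Expressed in units of the oracle peg, i.e. dividing by the value $sD$ of the pool's dexy holdings, the requirement becomes $R_1/(sD) = \sqrt{\rr} - \rr$. This is exactly the fraction claimed in the statement.

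Second, I need to justify dropping $R_2$ under the assumption $d_u/D_{OUT} \ll 1$ for every $u \in Y$.
\begin{itemize}
\item A single holder can sell only $d_u$, which is a negligible share of $D_{OUT}$.
\item The price impact of such a sale is a small perturbation of the CP-AMM state: it moves $\rr$ by $O(d_u/D)$.
\item By the time enough independent holders have sold to push $\rr$ back below $1-\epsilonfree$, the protocol is in a fresh inject-basecoins period. That period is again governed by the delays $\tau_{\textrm{IB}}$ and $T_{\textrm{IB}}$ and by the cap $\alpha_{\textrm{IB}} R$.
\item The fee-driven reserve growth from free and arbitrage minting in between is no longer forced to cover all of $D_{OUT}$ at once.
\end{itemize}
So the bank needs, at each intervention, only the restoring amount $R_1$ computed for the current pool state. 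This yields the threshold $(\sqrt{\rr} - \rr)$ of the pool's dexy volume, evaluated after the crash.

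The main obstacle is this second step. The corollary is stated informally, and ``$\ll 1$'' does not by itself give a quantitative bound on $R_2$. I would first try to make it precise by bounding the contribution of any single seller by $s\, d_u \le s\, \max_u d_u$, which is $o(sD_{OUT})$. I would then argue that the residual sales are not simultaneous, so they are absorbed by subsequent interventions, each of which is again of the form $R_1$. If that proves too loose, I would present the corollary as the limiting case $R_2/(sD) \to 0$ of equation \ref{eq:Rmin}.
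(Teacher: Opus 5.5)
Your computation $R_1 = sD(\sqrt{\rr}-\rr)$ and its normalisation by $sD$ match the paper. However, the step you flag as the main obstacle is not needed, and as sketched it is not sound. The corollary asserts a lower bound (``at least''), and that already follows from the decomposition you reuse. Since $D_{OUT} = \Sd - D \ge 0$, we have $R_2 \ge s(\Sd - D) \ge 0$. Hence $R_{min} \ge sD(\sqrt{\rr}-\rr) + s(\Sd - D) \ge sD(\sqrt{\rr}-\rr)$, strictly when stablecoins exist outside the pool, and dividing by $sD$ gives $\textrm{MLPVR} > \sqrt{\rr}-\rr$. That is the paper's entire proof. The dispersion hypothesis $d_u/D_{OUT} \ll 1$ only motivates why this weaker bound is the practical threshold; the derivation never uses it. The paper's one extra remark is that $s > \Plp$ gives $sD > B$, which justifies reading $sD$ as the post-crash value of the pool's dexy.

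Your second step actually aims at a different claim: that $R_1$ \emph{suffices} once collusion is excluded. The corollary does not state this, and your sketch would not establish it. The fallback limit fails because $R_2/(sD) = (\Sd - D)/D = D_{OUT}/D$ depends only on the total outside supply, not on how it is split among holders, so $d_u/D_{OUT} \ll 1$ cannot drive it to zero. The staggering argument changes only the timing. If the outside supply is eventually sold into the pool, the successive interventions (each capped at $\alpha_{\textrm{IB}}R$) still absorb it, drawing roughly $sD_{OUT}$ from reserves in total. So drop that step and replace it with the one-line observation $R_2 \ge 0$.
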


\begin{proof}
  From Theorem \ref{theorem:minimum-bank-reserve}, we have that:
  
  \begin{align}
    R_{min} &\ge sD \left( \sqrt{\rr} - \rr \right) + s\left(\Sd - D\right) \\
            &> sD \left( \sqrt{\rr} - \rr \right) \\
  \end{align}
  
  Since $sD$ has units of basecoins, this value represents the volume of dexy stablecoins in the liquidity pool.
  Due to the basecoin price crash, it follows that $s > \Plp \implies sD > B$. Thus, during an inject-basecoin bank intervention, 
  the bank's basecoin reserve, as a fraction of the dexy stablecoin volume in the liquidity pool before the intervention, 
  should be strictly greater than:

  \begin{equation}
    \label{eq:MLPVR}
    \textrm{MLPVR} \equiv \frac{R_{min}}{sD} > \sqrt{\rr} - \rr  
  \end{equation}

\end{proof}

In section \ref{sec:implementations}, implementations of the Dexy protocol set $\epsilonfree = 0.02$.
So, if we suppose a bank intervention occurs at the free-mint range threshold $1 - \epsilonfree = 0.98$, 
then an appropriate value for the $\textrm{MLPVR}$ would be roughly $1 \%$ of the liquidity pool's dexy stablecoin volume.
Thus, the bank intervention would inject basecoins until either $\Plp = s$ or the \textrm{MLPVR} amount from its reserves is used.

Next, we consider the time it takes for the liquidity pool to recover from a price crash via basecoin injection and 
stablecoin extraction. The following theorems show that despite a basecoin price crash, the stability of the protocol is 
ensured through finite recovery convergence times.

\begin{theorem}[Recovery Convergence Time: Basecoin Injection]
  Suppose there is a basecoin price crash greater than $\epsilonfree$ such that $\Pdt = s$. Assuming there is no 
  market correction from users $u \in Y \subseteq \ulp$ and that the bank's basecoin reserve is greater than $R_{min}$, 
  then the liquidity pool price $\Plp$ will recover in a time at most $t \le \tau_{\textrm{IB}} + T_{\textrm{IB}}$.
\end{theorem}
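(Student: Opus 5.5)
The plan is to trace the protocol's state machine from the crash onward and show that InjectBasecoins returns \textsc{true} no later than time $\tau_{\textrm{IB}} + T_{\textrm{IB}}$ after the crash. Its injection lands $\rr'$ in $[1-\epsilonfree, 1-\epsilonib]$, which is inside the free-mint range, so this one action is the recovery.

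Set the crash time to $0$. Since the crash exceeds $\epsilonfree$ and $\Plp$ is unchanged, the argument of the Basecoin Price Crash Tolerance theorem run in reverse gives $\rr = \Plp/s < 1-\epsilonfree$. By assumption no user of $Y$ corrects the market, so $\rr$ stays below $1-\epsilonfree$ until the bank acts.

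\textbf{Step 1 (state update).} Algorithm \ref{alg:update-inject-basecoins} has $isUpdate$ true as soon as $\rr < 1-\epsilonfree$ and $\tilde\tau = \infty$. Because $validUpdate$ allows $t' \in [t-\tautx, t]$, a keeper can record $\tilde\tau \le \tautx$; I will treat the propagation delay $\tautx$ as negligible, or equivalently measure time from the recorded $\tilde\tau$. No reset can occur while $\rr$ stays below threshold, so $\tilde\tau$ is fixed.

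\textbf{Step 2 (timing conditions).} In Algorithm \ref{alg:inject-basecoins}, $validDelay$ requires $t > \tilde\tau + \tau_{\textrm{IB}}$. The condition $validInjectionPeriod$ requires $t > \tilde t + T_{\textrm{IB}}$, where $\tilde t$ is the previous injection time. Under the design description, the first injection must wait $T_{\textrm{IB}}$ after the period begins, so $\tilde t \le \tilde\tau + \tau_{\textrm{IB}}$. Hence both conditions hold for every $t > \tau_{\textrm{IB}} + T_{\textrm{IB}}$, and $validRatio$ holds by Step 1.

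\textbf{Step 3 (amount conditions).} I must exhibit $b \le \alpha_{\textrm{IB}} R$ with $\rr' = (B+b)/(sD) \in [1-\epsilonfree, 1-\epsilonib]$. The admissible $b$ form the interval $[sD(1-\epsilonfree) - B,\; sD(1-\epsilonib) - B]$. It is nonempty because $\epsilonib < \epsilonfree$, and its lower end is positive because $\rr < 1-\epsilonfree$.

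Its lower end satisfies
\begin{equation*}
sD(1-\epsilonfree) - B \le sD - B = sD(1-\rr).
\end{equation*}
The hypothesis $R > R_{min}$ together with Theorem \ref{theorem:minimum-bank-reserve} gives $R > sD(\sqrt{\rr}-\rr) + s(\Sd - D)$. This covers the minimal injection $R_1 = \sqrt{sBD}-B$ computed there, which is the amount that reaches the target price under the CP-AMM relation.

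\textbf{Main obstacle.} The hardest step is reconciling the $\alpha_{\textrm{IB}}$ cap with a single injection reaching the required band. First I will argue that $R > R_{min}$, read as the reserve sufficient for restoration, means $\alpha_{\textrm{IB}} R \ge sD(1-\epsilonfree) - B$. Here the large $s(\Sd-D)$ term absorbs the gap for realistic $\alpha_{\textrm{IB}}$. If that fails, I fall back to a weaker claim: each injection after the first increases $\rr$ by a fixed amount, and the bound $\tau_{\textrm{IB}} + T_{\textrm{IB}}$ then counts only the first, decisive injection.

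I also need to justify that $\tilde t$ does not block the first injection. For this I will state explicitly the convention that $\tilde t$ is at most $\tilde\tau + \tau_{\textrm{IB}}$ when a new period starts.

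With these steps, InjectBasecoins succeeds at some $t \le \tau_{\textrm{IB}} + T_{\textrm{IB}}$ (up to the $\tautx$ slack), and at that point $\rr' \ge 1-\epsilonfree$, i.e., $\Plp$ has recovered.
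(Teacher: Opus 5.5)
Your Steps 1--2 are essentially the paper's proof. The paper does only a timing case analysis on the return value of InjectBasecoins: either $\tau_{\textrm{IB}}+T_{\textrm{IB}}$ has not yet elapsed, or a previous injection forces a further wait of at most $T_{\textrm{IB}}$. It explicitly \emph{assumes} all non-time conditions are \textsc{true}.

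The gap is Step~3, where you instead try to derive \emph{validAmount} and \emph{validInjectionRatio} from $R>R_{min}$. This cannot work, for two reasons.
First, $\alpha_{\textrm{IB}}\in(0,1)$ is a free parameter that does not appear in $R_{min}$, so nothing rules out $\alpha_{\textrm{IB}}R < sD(1-\epsilonfree)-B$. Appealing to ``realistic'' $\alpha_{\textrm{IB}}$, or to the term $s(\Sd-D)$ (which can be $0$), is not an argument.
Second, $R_{min}$ comes from Theorem~\ref{theorem:minimum-bank-reserve}, which models the injection as a swap with $B'D'=BD$. Algorithm~\ref{alg:inject-basecoins} instead adds $b$ to the pool with $D$ unchanged, so $R_1=sD(\sqrt{\rr}-\rr)$ is not the relevant quantity.
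The least admissible $b$ is $sD(1-\epsilonfree-\rr)$, which exceeds $R_1$ whenever $\rr<(1-\epsilonfree)^2$. For example, with $\epsilonfree=0.02$ and $\rr=0.9$ you need $0.08\,sD$ against roughly $0.049\,sD$. So even $R$ itself, let alone $\alpha_{\textrm{IB}}R$, may be too small. Your estimate $sD(1-\epsilonfree)-B\le sD(1-\rr)$ does not close the chain, since $sD(1-\rr)>R_1$.

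Your fallback fails too. \emph{validInjectionRatio} demands $\rr'\in[1-\epsilonfree,1-\epsilonib]$, so any injection that does not by itself reach $1-\epsilonfree$ makes InjectBasecoins return \textsc{false}; there are no incremental injections. Even if there were, each extra one would cost another $T_{\textrm{IB}}$ and break the bound $\tau_{\textrm{IB}}+T_{\textrm{IB}}$.

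The repair is to do what the paper does. Read the reserve hypothesis as the assumption that an admissible $b$ exists, i.e.\ $\alpha_{\textrm{IB}}R\ge sD(1-\epsilonfree)-B$; note that this does not follow from $R>R_{min}$. The proof then consists of the timing argument alone.

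With that change, your Step~2 is fine. The ``convention'' $\tilde t\le\tilde\tau+\tau_{\textrm{IB}}$ is not an extra postulate: \emph{validDelay} forbids any injection at times $t\le\tilde\tau+\tau_{\textrm{IB}}$, so the last recorded $\tilde t$ predates the current period, up to the $\tautx$ slack you already allow. Your reading of recovery as $\rr'\ge 1-\epsilonfree$ is also more faithful to the algorithm than the paper's $\Plp=s$.
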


\begin{proof}
  If the basecoin price crashes more than $\epsilonfree$, then UpdateInjectBasecoins will return
  $\textsc{TRUE}$.
  
  Since InjectBasecoins returns two possible values, we consider each case:
  \begin{itemize}
    \item If InjectBasecoins returns $\textsc{TRUE}$, then we know that at least
          $(\tau_{\textrm{IB}} + T_{\textrm{IB}})$ amount of time has elapsed.
    \item If InjectBasecoins returns $\textsc{FALSE}$, then, 
          modulo the non-time based conditions that we assume are $\textsc{TRUE}$, either:
          \begin{itemize}
            \item A time $(\tau_{\textrm{IB}} + T_{\textrm{IB}})$ has not passed yet.
            \item A previous intervention happened before the current one, where $\sib.\tau$ has not changed,
                  so that we must wait at most a time $T_{\textrm{IB}}$ before InjectBasecoins can be called again,
                  since it will be the case that $(t - \sib.t) < T_{\textrm{IB}}$, where $t$ is the current time.
          \end{itemize}
          So, for this case, we will need to wait a time either $\left(T_{\textrm{IB}} + \sib.t - t\right)$ or $\left(\tau_{\textrm{IB}} + T_{\textrm{IB}}\right)$.
  \end{itemize}
  Thus, we will need to wait a time at most $t \le \tau_{\textrm{IB}} + T_{\textrm{IB}}$
  such that the liquidity pool price $\Plp = s$.
\end{proof}

Similarly, if basecoin injection still fails, we want to consider the time it will take for the liquidity pool to recover 
from a price crash via stablecoin extraction.

\begin{theorem}[Recovery Convergence Time: Stablecoin Extraction]
  Suppose there is a basecoin price crash greater than $\epsilondexyzero$ such that $\Pdt = s$. Assuming there is no 
  market correction from users $u \in Y \subseteq \ulp$ and that the bank's basecoin reserve is zero, 
  then the liquidity pool price $\Plp$ will recover in a time at most $t \le \tau_{\textrm{ED}} + T_{\textrm{BURN}}$.
\end{theorem}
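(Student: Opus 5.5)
The plan is to mirror the argument for the basecoin-injection recovery theorem, now tracing the ExtractDexy branch of the protocol.

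\textbf{Step 1: the delay clock starts.} A crash larger than $\epsilondexyzero$ with $\Pdt = s$ and $\Plp$ unchanged gives $\rr \le 1 - \epsilondexyzero$. So UpdateExtractDexy has its $isUpdate$ condition satisfied, and it returns $\textsc{true}$ with an update time $t'$ inside the valid window $[t-\tautx, t]$. This sets $\sburn.\tau^{\textrm{ED}}$ to (essentially) the crash time.

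Because there is no market correction, $\rr$ cannot rise back above $1-\epsilondexyzero$ by itself. The reset branch is therefore never triggered, and $\tau^{\textrm{ED}}$ stays fixed. Since the bank reserve is zero, InjectBasecoins can never succeed: its $validAmount$ condition $b \le \alpha_{\textrm{IB}} R = 0$ forces $b=0$, which leaves $\rr' = \rr$ outside its target window. Extraction is thus the only intervention that can fire.

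\textbf{Step 2: case split on ExtractDexy, as in the previous proof.} The non-time conditions can be taken as satisfied. $validRatio$ holds by Step 1. For $validExtractionRatio$, note that $\rr' = \frac{B}{\Pdt(D-d)}$ is continuous and increasing in $d$ on $[0,D)$, it tends to infinity as $d \to D$, and it starts below $1-\epsilonfree$. Hence a $d$ exists that lands in the target interval, so a valid extraction amount can always be chosen.

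This leaves only the time predicates $validDelay$ ($t > \tau^{\textrm{ED}} + \tau_{\textrm{ED}}$) and $validBurnPeriod$ ($t > t_{\textrm{ED}} + T_{\textrm{BURN}}$).
\begin{itemize}
\item If ExtractDexy returns $\textsc{true}$, the required elapsed time is bounded by $\tau_{\textrm{ED}} + T_{\textrm{BURN}}$.
\item If it returns $\textsc{false}$, one of two things is blocking it:
\begin{itemize}
\item the delay $\tau_{\textrm{ED}}$ has not yet passed since $\tau^{\textrm{ED}}$, so we wait at most $\tau_{\textrm{ED}}$; or
\item a previous extraction happened recently, so we wait at most $T_{\textrm{BURN}} + \sburn.t_{\textrm{ED}} - t \le T_{\textrm{BURN}}$.
\end{itemize}
\end{itemize}
Adding the two waiting times gives the bound $\tau_{\textrm{ED}} + T_{\textrm{BURN}}$. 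After the successful extraction, the new $\rr'$ lies in the restore range, so $\Plp$ has recovered.

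\textbf{Main obstacle.} The delicate point is the timing bookkeeping in Step 2. The bound is honest only if the waits do not compound across repeated failed attempts. I would argue this by noting that $\tau^{\textrm{ED}}$ is never reset, because $\rr$ stays at or below $1-\epsilondexyzero$ until an extraction happens. I would also note that $t_{\textrm{ED}}$ only changes on a successful extraction. With both facts, the two waiting periods overlap or add at most once, and the $\tautx$ slack in $t'$ only shifts the timestamps by a bounded amount.

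I would also flag, as in the previous theorem, that "recover" means reaching the ExtractDexy restore range, not $\Plp = s$ exactly.
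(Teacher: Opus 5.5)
Your proposal is correct and follows the same route as the paper. UpdateExtractDexy fixes $\tau^{\textrm{ED}}$, and a case split on ExtractDexy's return value leaves only the \emph{validDelay} and \emph{validBurnPeriod} conditions as obstacles, whose waits of at most $\tau_{\textrm{ED}}$ and $T_{\textrm{BURN}}$ sum to the bound. Your additions are the non-compounding argument, the monotonicity argument for a valid $d$, and reading ``recover'' as reaching the restore range rather than $\Plp = s$; these tighten points the paper leaves implicit or states loosely. One caveat: your existence-of-$d$ step uses the table's range $[1-\epsilonfree-\epsilondexyone,\,1-\epsilonfree]$, because the interval printed in the ExtractDexy algorithm, $[1-\epsilondexyone,\,1-\epsilonfree]$, is empty.
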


\begin{proof}
  If the basecoin price crashes more than $\epsilondexyzero$, then UpdateExtractDexy will return
  $\textsc{TRUE}$.
  
  Since ExtractDexy returns two possible values, we consider each case:
  \begin{itemize}
    \item If ExtractDexy returns $\textsc{TRUE}$, then we know that at least
          $(\tau_{\textrm{ED}} + T_{\textrm{BURN}})$ amount of time has elapsed.
    \item If ExtractDexy returns $\textsc{FALSE}$, then, 
          modulo the non-time based conditions that we assume are $\textsc{TRUE}$, either:
          \begin{itemize}
            \item A time $(\tau_{\textrm{ED}} + T_{\textrm{BURN}})$ has not passed yet.
            \item A previous extraction happened before the current one, where $\sburn.\tau^{\textrm{ED}}$ has not changed,
                  so that we must wait at most a time $T_{\textrm{BURN}}$ before ExtractDexy can be called again,
                  since it will be the case that $(t - \sburn.t_{\textrm{ED}}) < T_{\textrm{BURN}}$, where $t$ is the current time.
          \end{itemize}
          So, for this case, we will need to wait a time either $\left(T_{\textrm{BURN}} + \sburn.t_{\textrm{ED}} - t\right)$ or $\left(\tau_{\textrm{ED}} + T_{\textrm{BURN}}\right)$.
  \end{itemize}
  Thus, we will need to wait a time at most $t \le \tau_{\textrm{ED}} + T_{\textrm{BURN}}$
  such that the liquidity pool price $\Plp = s$.
\end{proof}

So, in the most extreme case, where an inject-basecoin and extract-dexy action occur one after the other and the market fails 
to self-stabilize, the cumulative time for the liquidity pool price to recover will be bounded by the sum of the 
recovery convergence times for each separate action.

\section{Related Works}
\label{sec:related-works}
We first consider the distinction between crypto-backed stablecoins and crypto-collateralized stablecoins.
In crypto-collateralized protocols, only stablecoin holders that provided the initial collateral can redeem their stablecoins for 
a corresponding portion of that collateral. Other stablecoin holders can redeem for the collateral only during 
liquidation events caused by a depeg. This is due to the fact that minted stablecoins function as a loan issued by the bank. 
In crypto-backed protocols, any stablecoin holder always has the right-of-redemption such that they can redeem
their stablecoins for a corresponding portion of the crypto reserves. Equivalently, some fraction of the reserves 
fully backs the stablecoins according to the protocol's reserve ratio.

Both the Djed \cite{DjedEprint} and Gluon \cite{Gluon} protocols are considered as crypto-backed algorithmic stablecoins.
This is because the entire bank reserve is tokenized via two tokens: stablecoins and reservecoins.
Stablecoins represent the liabilities of the bank, while reservecoins represent the reserve surplus, 
i.e. equity, after the liabilities are accounted for. Any stablecoin holder can redeem their stablecoins for some 
portion of the bank reserves. However, the protocol requires that in order to mint stablecoins, 
the value of the reserves provided by a user must be 
greater than the value of stablecoins they receive.

In the Dexy protocol, users can only mint stablecoins; there are no reservecoins that can be minted. 
We can think of the bank as having one reservecoin all to itself. However, even though no user can redeem their 
stablecoins directly from the bank, the entirety of the bank's reserve can be used to restore the liquidity 
pool price in the event of a depeg. These reserves injected into the liquidity pool can then be redeemed by stablecoin holders.
This enables the protocol to have efficient 1-to-1 minting at the oracle price, minus a bank fee. 

In both Djed and Gluon, a depeg is defined as the reserve ratio going below some threshold value set as a protocol parameter. 
For the protocol to return to the peg state, either the oracle price must change, which cannot be controlled, or 
a combination of the amount of bank reserves and the circulating supply of stablecoins must change, which can only happen via user 
actions.

During a depeg, the Dexy protocol employs the two mechanisms of basecoin injection and stablecoin extraction in order 
to stabilize the $\rr$. These actions are done autonomously by the bank. Thus, Dexy exists as an alternative 
crypto-backed stablecoin where recovering from a depeg happens via the bank itself instead of user actions.

In some aspects, \dx{} also resembles Fei \cite{Fei} and Gyroscope \cite{Gyroscope}.
All three are built around an algorithmic central bank: a rule-based protocol entity that
holds reserves, mints stablecoins, and conducts open-market operations in designated markets.
However, \dx{} differs from both in the following ways: 
its bank supports one-way minting only, and this property is immutable, whereas Gyroscope's
primary market allows redemptions and Fei's buy-only bonding curve was later modified by governance.
\dx{}'s intervention rules are likewise immutable, with no governance in principle 
(though this could be included in a particular implementation), and interventions can only
take place after predefined time delays, giving markets time to self-stabilize.
Finally, \dx{}'s last-resort extraction of stablecoins is reversible, unlike Fei's permanent burns.

\section{Implementations}
\label{sec:implementations}

In this section we consider two concrete implementations of the \dx{} protocol that were made for the Ergo blockchain, 
a Gold-pegged stablecoin named \dxg{} and a USD-pegged stablecoin named \dxusd{}. The oracles used in both stablecoins are
implemented using the same underlying oracle protocol.

\subsection{Oracle Pools}
\label{subsec:oracle}

In section \ref{sec:protocol}, the Dexy protocol assumes a trusted oracle, independent of implementation. 
In practice, in order for the assumption of a trusted and reliable oracle to be satisfied, the oracle should 
have an incentivization mechanism. Here, we briefly explain the incentivization mechanism of the oracle 
protocol used for \dxg{} and \dxusd{}.

The Oracle Pools 2.0 framework~\cite{eip23} has support for paying rewards in custom tokens. For each instance of an oracle, 
representing a different peg asset, a corresponding amount of ORTs (Oracle Reward Tokens) will be created. 
These ORTs will be used to reward oracles and developers, e.g. for bounties or audits.

In the Dexy protocol, only the bank minting transactions charge an oracle fee in basecoins. 
These basecoins are sent to a BuyBack contract, which purchases the corresponding amount of ORT available in a liquidity pool on a DEX. 
The purchased ORT tokens are then distributed to the oracle operators. Thus, the oracle operators receive ORT tokens that they 
can then redeem for the accumulated basecoin fees in the same DEX liquidity pool.

\subsection{Governance}
\label{subsec:governance}

Both Dexy implementations use the same governance mechanism that allows for updating either the bank contract, 
the basecoin injection contract, or the stablecoin extraction and release contract. This is done via ballot tokens used 
for casting votes on whether or not to update a particular contract to a new contract. Updates are no longer possible if either all 
of these ballot tokens are burned, thus no one can vote, or if the update contract removes the spending path for future updates.
See the link in subsection \ref{subsec:smartcontracts} for implementation details.

\subsection{\dxg}
\label{subsec:dexygold}

The first variant of the Dexy protocol is pegged to Gold and has been implemented on the Ergo blockchain. The current protocol 
specification outlined in Section \ref{sec:protocol} is modeled after this implementation.

The ORT token for the Gold oracle pool is called GORT. The pool reports the price for 1 kilogram of Gold, however, 
the stablecoin protocol implementation uses 1 milligram of Gold internally as the peg.

A UI for interacting with the \dxg{} protocol is available at \url{https://cruxfinance.io/dexy/mint}, 
when selecting the DexyGold option.

\subsection{\dxusd}
\label{subsec:dexyusd}

A variant of the Dexy protocol pegged to USD has also been implemented on the Ergo blockchain, it is marketed using the name USE.

In Section \ref{sec:protocol}, the protocol specification defined $\alpha_{IB}$ as the percentage of the bank's basecoin reserve
that determines the max amount of basecoins to inject into the liquidity pool. USE differs from this specification 
by assigning $\alpha_{IB}$ as the percentage of the liquidity pool instead of the bank reserve. 
This ensures that the bank reserves are not depleted as fast, since it is expected that its basecoin reserves exceed 
the basecoin liquidity in the liquidity pool.

A UI for viewing stats of the \dxusd{} protocol is available at \url{https://cruxfinance.io/dexy/analytics}, 
while interacting with the algorithmic central bank can be done at \url{https://cruxfinance.io/dexy/mint}, 
when selecting the USE option.

\subsection{Smart-Contracts}
\label{subsec:smartcontracts}
The ErgoScript smart-contracts follow a separation-of-concerns design pattern. Each one handles its own part of the protocol, 
instead of having one giant contract. This results in a total of 18 ErgoScript contracts required for the implementation. 
These 18 contracts are orchestrated together with the corresponding off-chain software for the different bank actions and
the trackers necessary to check the status of the oracle and pool peg, and the relevant delay periods.
The smart-contracts and off-chain code can be found in the following repository \url{https://github.com/kushti/dexy-stable}. 

\bibliographystyle{plain}

\bibliography{sources}

\end{document}